\newtheorem{lemma}{Lemma}
\newtheorem{prop}[lemma]{Proposition}
\newtheorem*{maintheorem}{Main Theorem}
\newtheorem*{tarskitheorem}{Knaster--Tarski Theorem}
\newtheorem*{conjecture}{Conjecture}
\newcommand{\Nset}{\mathbb N}
\newcommand{\Rset}{\mathbb R}
\newcommand{\dist}{\mathrm{dist}}
\DeclareMathOperator{\dom}{dom}
\DeclareMathOperator{\bisect}{bisect}
\newcommand{\bd}{\partial}
\newcommand{\closure}{\overline}
\title{Distance $k$-sectors exist\footnotetext{%
A preliminary form
of the results was announced in 
Section~4 of \cite{zone_unique_eurocg}. 
For the case $k=3$,
some of the methods and results were found
essentially independently in \cite{reem09:_voronoi_and_zone_diagrams}.
}}
\author{%
Keiko Imai%
\thanks{Department of Information and System Engineering, 
        Chuo University. 
        \texttt{imai@ise.chuo-u.ac.jp}} 
\and 
Akitoshi Kawamura%
\thanks{Department of Computer Science, University of Toronto. 
        \texttt{kawamura@cs.toronto.edu}}
\and
Ji\v{r}\'{i} Matou\v{s}ek%
\thanks{Department of Applied Mathematics, Charles University. 
        \texttt{matousek@kam.mff.cuni.cz}}
\and
Daniel Reem%
\thanks{Department of Mathematics, 
        The Technion -- Israel Institute of Technology. 
        \texttt{dream@tx.technion.ac.il}}
\and
Takeshi Tokuyama%
\thanks{Graduate School of Information Sciences, Tohoku University. 
        \texttt{tokuyama@dais.is.tohoku.ac.jp}}}
\date{December 2009}
\begin{document}

{\tabcolsep15pt 
\makeatletter\let\@fnsymbol\@arabic\makeatother
\maketitle
}

\begin{abstract}
The \emph{bisector} of two nonempty sets $P$ and $Q$
in $\Rset ^d$ is the set of all points with equal distance to $P$ and to $Q$. 
A \emph{distance $k$-sector} of $P$ and $Q$, 
where $k \geq 2$ is an integer, 
is a $(k-1)$-tuple $(C_1,C_2,\ldots,C_{k-1})$ such that
$C_i$ is the bisector of $C_{i-1}$ and $C_{i+1}$
for every $i=1,2,\ldots,k-1$, where $C_0=P$ and $C_k=Q$.
This notion, for  the case where $P$ and $Q$ are points
in $\Rset ^2$, was introduced by Asano, Matou\v{s}ek, and Tokuyama,
motivated by a question of Murata in VLSI design. They established
the existence and uniqueness of the distance trisector in this
special case. We prove the existence of a distance $k$-sector
for all $k$ and
for every two disjoint, nonempty, closed sets $P$ and $Q$ in 
Euclidean spaces of any (finite) dimension
(uniqueness remains open), 
or more generally, in proper geodesic spaces. 
The core of the proof is a new notion of \emph{$k$-gradation}
for $P$ and $Q$, whose existence (even in an arbitrary metric space)
is proved using the Knaster--Tarski fixed point theorem,
by a method introduced by Reem and Reich for a slightly different purpose. 
\end{abstract}

\section{Introduction}

The \emph{bisector} of two
nonempty sets $X$ and $Y$ in Euclidean space, 
or in an arbitrary metric space $(M, \dist)$, 
is defined as
\begin{equation}
 \label{equation: definition of bisector}
\bisect (X, Y) = \{\, z \in M : \dist (z, X) = \dist (z, Y) \,\}, 
\end{equation}
where $\dist (z, X) = \inf _{x \in X} \dist(z,x)$
denotes the distance of $z$ from  a set~$X$. 

Let $k\ge 2$ be an integer and let $P$, $Q$ be disjoint nonempty 
sets in $M$ 
called the \emph{sites}.
A \emph{distance $k$-sector} (or simply \emph{$k$-sector})
of $P$ and $Q$ is  
a $(k - 1)$-tuple $(C _1, \dots, C _{k - 1})$ of 
nonempty subsets of $M$ such that 
\begin{align}
\label{equation: definition of k-sector}
 C _i & = \bisect (C _{i - 1}, C _{i + 1}), 
&
 i = 1, \ldots, k - 1, 
\end{align}
where $C _0 = P$ and $C _k = Q$ (see Figures 
\ref{figure: k-sector example} and \ref{figure: 7-sector}). 

\begin{figure}
\begin{center}
\includegraphics[clip,scale=1.1]{./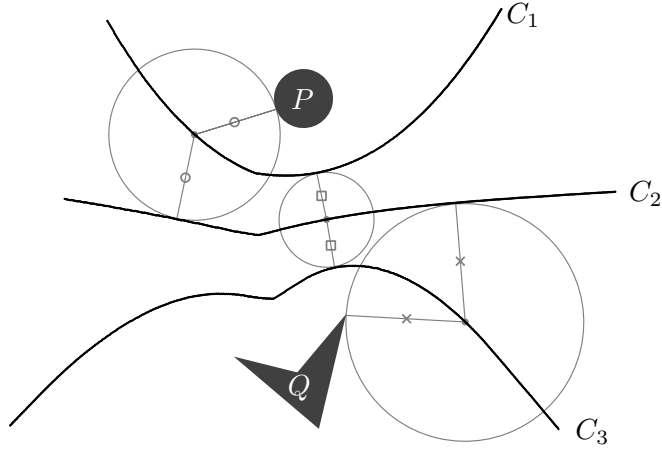}
\caption{A $4$-sector $(C _1, C _2, C _3)$ of sets $P$ and $Q$ in Euclidean plane.
Each point on the curve $C _i$ is at the same distance
from $C _{i - 1}$ and $C _{i + 1}$.
Note that $C _2$ is not the bisector of $P$ and $Q$.}
 \label{figure: k-sector example}
\end{center}
\end{figure}

Distance $k$-sectors were introduced by
Asano et al.~\cite{asano07:_distan_trisec_curve},
motivated by a question of Murata from VLSI design:
Suppose that we are given a topology of a circuit layer, and we need to 
put $k - 1$ wires through a corridor between given two sets of obstacles 
(modules and other wires) on the board. 
The circuit will have a  
high failure probability if the gaps between the wires are narrow. 
Which curves should the wires follow 
in order to minimize the failure probability? 
If $k = 2$, the curve should be the distance bisector; 
in general, each curve should be the bisector 
of its adjacent pair of curves, 
as stated in the definition of a $k$-sector. 

\begin{figure}
\begin{center}
\includegraphics[clip,scale=0.315]{./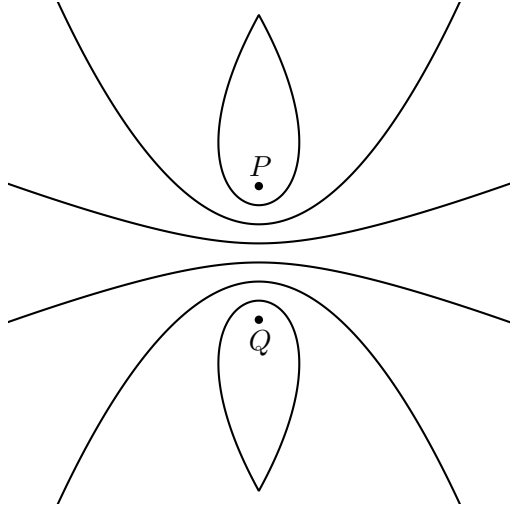}%
\smash{\makebox[0pt][r]{\raisebox{124pt}{$P$}\hspace*{90pt}}}%
\smash{\makebox[0pt][r]{\raisebox{58pt}{$Q$}\hspace*{90pt}}}%
\caption{A $7$-sector of two singleton sets $P$ and $Q$ in Euclidean plane.}
 \label{figure: 7-sector}
\end{center}
\end{figure}

\looseness=-1 
A similar problem occurs also in designing routes of $k - 1$ autonomous robots moving 
in a narrow polygonal corridor.  
Each robot has its own predetermined route 
(say, it is drawn on the floor with a coloured tape that the robot can recognize) 
and tries to follow it.
We want to design the routes to be far away from each other so that the robots can easily avoid collision. 

Despite its innocent definition, 
it is nontrivial to find a $k$-sector even in Euclidean plane. 
The bisector of two point sites $P$ and $Q$ in $\Rset ^2$ is a line, and   
an elementary geometric argument shows that there is a distance
$4$-sector of them consisting of a straight line and two parabolas.
However, the problem was not investigated for other values of $k$ 
until Asano et al.~\cite{asano07:_distan_trisec_curve} proved 
the existence and uniqueness of the $3$-sector of two points in Euclidean plane. 
Chun et al.~\cite{chun07:_distan_trisec_of_segmen_and}
extended this to the case where $Q$ is a line segment.

We give the first proof of existence of distance
$k$-sectors in Euclidean spaces for a general~$k$. 
This improves on the previous
proofs in generality and simplicity even for $k=3$. 

\begin{maintheorem}
Every two disjoint nonempty closed sets
$P$ and $Q$ in Euclidean space~$\Rset ^d$, or more generally, 
in a proper geodesic metric space, 
have at least one $k$-sector.
\end{maintheorem}

Here, a metric space $(M, \dist)$ is called \emph{proper} if 
all closed balls are compact. 
It is called \emph{geodesic}
if for every two distinct points $x$, $y\in M$ there is a \emph{metric segment}
in $M$ connecting them, i.e., an isometric mapping $\gamma \colon [a, b] \to M$
of an interval $[a,b]\subset \Rset$ with $\gamma(a)=x$ and $\gamma(b)=y$.
In particular, a convex subset of a normed space is a geodesic metric space. 
Another example is 
the surface of a sphere, 
where the distance between two points is measured by the length of the 
shortest path on the surface connecting them. 
Geodesic metric spaces are a reasonably general class of metric
spaces in which our arguments go through, although one could
probably make up even more general conditions.
 
Let us remark that if $\dist(P,Q)>0$ and $k=3$, 
then the properness assumption can be omitted; 
see \cite{reem09:_voronoi_and_zone_diagrams}
for a proof. 

On the other hand, $k$-sectors need not exist in arbitrary metric
spaces. A simple example for
$k=3$ is the subspace $M=\{-1,0,1\}$ of the real line,
$ P=\{1\}$, and $Q=\{-1\}$.
\medskip

From now on, unless otherwise noted,
subscripts $i$ and $j$ range over $1$, \ldots, $k - 1$;
for example, $(C _i) _i$ stands for the $k$-tuple $(C _1, \dots, C _{k - 1})$.

\paragraph{Gradations} One of the main steps in the proof of
the main theorem is introducing the notion of a \emph{$k$-gradation}
of $P$ and $Q$, which is related to a $k$-sector but easier
to work with. 
First, for nonempty sets $X$, $Y$ 
in a metric space $(M, \dist)$, 
we define the
\emph{dominance region} of $X$ over $Y$ by
\begin{equation}
 \dom (X, Y) = \{\, z \in M : \dist (z, X) \leq \dist (z, Y) \,\}.
\end{equation}
A \emph{$k$-gradation} between nonempty subsets $P$ and $Q$ of $M$
is a $(k - 1)$-tuple $(R _i, S _i) _i$ of
pairs of subsets of $M$ satisfying
\begin{align}
\label{equation: definition of gradation}
 R _i
&
=
 \dom (R _{i - 1}, S _{i + 1}),
&
 S _i
&
=
 \dom (S _{i + 1}, R _{i - 1}),
&
 i = 1, \ldots, k - 1,
\end{align}
where $R _0 = P$ and $S _k = Q$.

Using the Knaster--Tarski fixed point theorem~%
\cite{tarski55:_lattic_theor_fixpoin_theor_and_its_applic},
we prove in Section~\ref{section: main proof}
that $k$-gradations always exist:

\begin{prop}\label{p:gradations}
For every nonempty sets $P$ and $Q$
in an arbitrary metric space $(M,\dist)$, there exists
at least one $k$-gradation.
\end{prop}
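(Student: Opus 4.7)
The plan is to recast the condition of being a $k$-gradation as a fixed-point equation for a suitably designed operator and then apply the Knaster--Tarski theorem. The subtlety is that $\dom(X,Y)$ is inclusion-monotone in $X$ but inclusion-antitone in $Y$, so the correct order on tuples has to reverse the inclusion on the $S$-coordinates in order to turn this mixed variance into genuine monotonicity.

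Concretely, I would take $L$ to be the set of all $(2(k-1))$-tuples $(R_1,\ldots,R_{k-1},S_1,\ldots,S_{k-1})$ with each $R_i,S_i\subseteq M$, ordered by $(R_i,S_i)_i \preceq (R'_i,S'_i)_i$ iff $R_i\subseteq R'_i$ and $S'_i\subseteq S_i$ for every $i$. This is a complete lattice: suprema and infima are computed componentwise, with unions on the $R$-components and intersections on the $S$-components for $\sup$, and vice versa for $\inf$. Define $F\colon L\to L$ by
\[
 F\bigl((R_i,S_i)_i\bigr) \;=\; (R'_i,S'_i)_i,\qquad
 R'_i=\dom(R_{i-1},S_{i+1}),\quad
 S'_i=\dom(S_{i+1},R_{i-1}),
\]
with the boundary convention $R_0=P$ and $S_k=Q$. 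By definition, a $k$-gradation between $P$ and $Q$ is exactly a fixed point of $F$.

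The central step is to verify that $F$ is monotone with respect to $\preceq$. This reduces to two easy facts: enlarging $X$ can only decrease $\dist(\cdot,X)$, hence can only enlarge $\dom(X,Y)$; and enlarging $Y$ can only decrease $\dist(\cdot,Y)$, hence can only shrink $\dom(X,Y)$. Now suppose $(R_i,S_i)_i\preceq(\widetilde R_i,\widetilde S_i)_i$, so that $R_{i-1}\subseteq\widetilde R_{i-1}$ and $\widetilde S_{i+1}\subseteq S_{i+1}$ for every $i$ (including the boundary cases, which are trivial because $R_0=P$ and $S_k=Q$ are fixed). Then $\dom(R_{i-1},S_{i+1})\subseteq\dom(\widetilde R_{i-1},\widetilde S_{i+1})$, i.e.\ the new $R$-component grows, and dually $\dom(\widetilde S_{i+1},\widetilde R_{i-1})\subseteq\dom(S_{i+1},R_{i-1})$, i.e.\ the new $S$-component shrinks. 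Both conclusions are precisely what is needed for $F$ to preserve $\preceq$.

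With $L$ a complete lattice and $F$ monotone, the Knaster--Tarski theorem furnishes a fixed point of $F$, which is the desired $k$-gradation. The main obstacle is really conceptual rather than technical: choosing the order so that $F$ becomes monotone. Once the $S$-coordinates are reversed, verifying completeness of $L$ and monotonicity of $F$ is routine, and no assumption on $M$ beyond being a metric space is needed.
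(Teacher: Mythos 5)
Your proposal is correct and follows essentially the same route as the paper: the same operator $F$, the same order with the $S$-coordinates reversed to make $F$ monotone, and an appeal to the Knaster--Tarski theorem. The only difference is that the paper restricts the lattice to tuples with $R_i\supseteq P$, $S_i\supseteq Q$ and $R_i\cup S_i=M$, which sidesteps having to assign a meaning to $\dom(X,Y)$ when a component is empty; on your larger lattice you would need the (harmless) convention $\dist(z,\emptyset)=+\infty$, after which every fixed point automatically has nonempty components anyway.
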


The idea of applying the Knaster--Tarski theorem
to a similar setting is from 
\cite{reem09:_zone_and_doubl_zone_diagr},
where it is used to prove the existence of \emph{double zone diagrams}.
A slight modification of 
Proposition~\ref{p:gradations} also holds in the more general
setting of \emph{m-spaces}~\cite{reem09:_zone_and_doubl_zone_diagr}.

In Section~\ref{section: gradation},
we establish the following connection between $k$-gradations and
$k$-sectors.

\begin{prop}
\label{p:k-sector is a boundary}
Let $P$, $Q$ be nonempty, disjoint, closed sets
in a proper geodesic metric space. Then 
a $(k - 1)$-tuple $(C _i) _i$ of sets is
a $k$-sector of $P$ and $Q$ if and only if
\begin{align}
\label{equation: k-sector is a boundary}
 C _i & = R _i \cap S _i, & i = 1, \dots, k - 1
\end{align}
for some $k$-gradation $(R _i, S _i) _i$
between $P$ and $Q$.
\end{prop}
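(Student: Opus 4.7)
The plan is to start with the algebraic identity
\[ R_i \cap S_i \;=\; \dom(R_{i-1}, S_{i+1}) \cap \dom(S_{i+1}, R_{i-1}) \;=\; \bisect(R_{i-1}, S_{i+1}), \]
valid for any gradation, since the defining inequalities of $R_i$ and $S_i$ are complementary. Writing $C_i := R_i \cap S_i$, the sector condition $C_i = \bisect(C_{i-1}, C_{i+1})$ reduces to the set-theoretic identity $\bisect(R_{i-1}, S_{i+1}) = \bisect(C_{i-1}, C_{i+1})$, and this is where the proper geodesic hypothesis enters.

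The heart of the proof is the following key lemma: for every $z \notin R_{i-1}$, $\dist(z, R_{i-1}) = \dist(z, C_{i-1})$ (and symmetrically for $S_{i+1}$ and $C_{i+1}$). I would prove it as follows. Properness gives a nearest point $p \in R_{i-1}$ to $z$. Geodesity forces $p \in \partial R_{i-1}$: the first crossing of a geodesic from $z$ into the closed set $R_{i-1}$ is a point of $\partial R_{i-1}$ at most as far from $z$ as $p$, so by minimality it must coincide with $p$. Finally, since $R_{i-1} = \dom(R_{i-2}, S_i)$ is the $0$-sublevel set of the continuous function $w \mapsto \dist(w, R_{i-2}) - \dist(w, S_i)$, one gets $\partial R_{i-1} \subseteq \bisect(R_{i-2}, S_i) = R_{i-1} \cap S_{i-1} = C_{i-1}$. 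Hence $p \in C_{i-1}$, yielding $\dist(z, C_{i-1}) \le \dist(z, p) = \dist(z, R_{i-1})$; the reverse inequality is trivial from $C_{i-1} \subseteq R_{i-1}$.

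The second ingredient I would need is that $R_a \cap S_b = \emptyset$ whenever $a < b$. This is obtained by iterating the gradation definitions: if $z \in R_a \cap S_b$, then the monotonicity $R_a \subseteq R_{b-1}$ gives $\dist(z, R_{b-1}) = 0$, and then the inequality $\dist(z, S_{b+1}) \le \dist(z, R_{b-1})$ coming from $z \in S_b$ forces $z \in S_{b+1}$; pushing $b$ up to $k$ and then $a$ symmetrically down to $0$ produces $z \in P \cap Q$, contradicting disjointness. Combining the key lemma with this emptiness, the identity $\bisect(R_{i-1}, S_{i+1}) = \bisect(C_{i-1}, C_{i+1})$ becomes routine case analysis on whether $z$ lies in $R_{i-1}$ or $S_{i+1}$: in the "free" case $z$ outside both, the lemma equates both pairs of distances simultaneously, while the "mixed" cases are ruled out by the emptiness together with the lemma applied to the other side. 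This gives the ``if'' direction. For the ``only if'' direction, given a $k$-sector $(C_i)$, I would construct a candidate gradation recursively (using $R_0 = P$, $S_k = Q$, and propagating inward with dominance regions built from the $C_j$'s), and verify the gradation relations by invoking the key lemma in reverse.

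The main obstacle is the key lemma itself: forcing the nearest point in $R_{i-1}$ onto the boundary $\partial R_{i-1}$ and then recognising that boundary as $C_{i-1}$ via the sublevel-set description. Both properness (for the nearest point to exist) and geodesic structure (for a minimality argument to land it on the boundary) are essential here; without them the equality $\dist(z, R_{i-1}) = \dist(z, C_{i-1})$ would fail in general, and everything else in the proposition is bookkeeping once the lemma is in hand.
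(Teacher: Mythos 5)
Your overall architecture matches the paper's: reduce the sector condition to $\bisect(R_{i-1},S_{i+1})=\bisect(C_{i-1},C_{i+1})$, prove a distance-equality lemma by sending a geodesic across $\partial R_{i-1}\subseteq C_{i-1}$, and establish $R_a\cap S_b=\emptyset$ for $a<b$ by iterating the gradation relations (your argument for this last point is correct). But there is a genuine gap in the case analysis. Your key lemma only treats points \emph{outside} $R_{i-1}$, where $\dist(z,R_{i-1})=\dist(z,C_{i-1})$. The problematic case is $z\in R_{i-1}\setminus C_{i-1}$ with $\dist(z,C_{i-1})=\dist(z,C_{i+1})$: such a $z$ lies in $\bisect(C_{i-1},C_{i+1})$ but not in $\bisect(R_{i-1},S_{i+1})$ (where its two distances are $0$ and something positive). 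You claim this mixed case is ``ruled out by the emptiness together with the lemma applied to the other side,'' but those two facts only give $\dist(z,C_{i+1})=\dist(z,S_{i+1})>0$; they say nothing about $\dist(z,C_{i-1})$, which can be any nonnegative number as far as your lemma is concerned. What is needed is a second, \emph{strict} inequality: for $z\in R_{i-1}$, $\dist(z,S_{i+1})>\dist(z,C_{i-1})$, proved by taking the nearest point of $S_{i+1}$ to $z$ (properness) and noting that the geodesic to it must first cross $\partial R_{i-1}\subseteq C_{i-1}$ at a point distinct from its endpoint (disjointness of $R_{i-1}$ and $S_{i+1}$). This is not automatic: the paper's non-geodesic counterexample on $\Rset$ (the point $2$ equidistant from $P=(-\infty,0]$ and a set inside $(0,1)$) is exactly a failure of this strict inequality, so it cannot follow from the ingredients you list. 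The fix uses the same toolkit you already have, but it is a separate statement that must be proved, not a consequence of your key lemma.

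The ``only if'' direction is also too thin to assess. The natural candidate is the non-recursive $R_i=\dom(C_{i-1},C_{i+1})$, $S_i=\dom(C_{i+1},C_{i-1})$, and the crux there is proving $R_i\cap S_{i+1}=\emptyset$ \emph{before} you can apply any version of the key lemma to conclude $\dom(R_{i-1},S_{i+1})=\dom(C_{i-1},C_{i+1})=R_i$. The paper does this by showing $R_i\cap S_{i+1}$ is disjoint from $C_i\cup C_{i+1}\supseteq\partial(R_i\cap S_{i+1})$, hence has empty boundary, hence is empty by connectedness of a geodesic space. Your sketch (``verify the gradation relations by invoking the key lemma in reverse'') does not indicate how this disjointness would be obtained, and it does not follow from the forward-direction material.
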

For instance, 
the $k$-sectors $(C _i) _i$ in Figures \ref{figure: k-sector example}, 
\ref{figure: 7-sector} and 
\ref{figure: l1 non-unique trisector} 
correspond to the $k$-gradations $(R _i, S _i) _i$ where
each $R _i$ is the union of $C _i$ and the region above it, 
and each $S _i$ is the union of $C _i$ and the region below it. 

The main theorem is an immediate consequence of Propositions~\ref{p:gradations}
and \ref{p:k-sector is a boundary}. 

\paragraph{\boldmath $3$-gradations and zone diagrams}
A \emph{zone diagram} of $P$, $Q$ is, according to the general
definition of Asano et al.~\cite{asano07:_zone_diagr},
a pair of sets $(A,B)$ such that $A=\dom(P,B)$ and $B=\dom(Q,A)$.
By comparing the definitions, we can see that
if $((R_1, S_1), (R_2,S _2))$ is a $3$-gradation
for $P$, $Q$, then $(R_1,S_2)$ is a zone diagram of $P$, $Q$.
Conversely, given a zone diagram $(A,B)$,
we can set $R_1:=A$, $S_2:=B$, $R_2:=\dom(R_1,Q)$,
$S_1:=\dom(S_2,P)$ to obtain a $3$-gradation
(we note that $R_2$ and $S_1$ are uniquely determined
by $R_1$ and $S_2$).

The existence of zone diagrams of arbitrary two nonempty sets
in an arbitrary metric space (and even in the still
more general setting of m-spaces) was proved
by Reem and Reich~%
\cite[Theorem~5.6]{reem09:_zone_and_doubl_zone_diagr}.
By the above, it immediately implies the existence of
$3$-gradations, a special case of Proposition~\ref{p:gradations}.

\paragraph{Uniqueness} Kawamura et al.~\cite{kmt-zone}
(also see \cite{zone_unique_eurocg} for a preliminary version)
proved the existence and uniqueness of zone diagrams in $\Rset ^d$
(for finitely many closed and pairwise separated sites) under
the Euclidean distance, and more generally, under
any smooth and uniformly convex norm.  
By Proposition~\ref{p:k-sector is a boundary}, 
this implies the uniqueness of trisectors under the same
conditions. This is the most general uniqueness result
for $k$-sectors we are aware of.

For general metrics, $k$-sectors need not be unique.
A simple example, for the $\ell_1$ metric in the plane
(given by $\dist(x,y)=|x_1-y_1|+|x_2-y_2|$), is shown
in Figure~\ref{figure: l1 non-unique trisector}; 
essentially, it was discovered by Asano and Kirkpatrick
\cite{asano06:_distan_trisec_curves}. The set $C_1$
is a polygonal curve, while $C_2$
is ``fat'', consisting of two straight segments
and two quadrants. A different trisector is obtained
as a mirror reflection of the one shown.

\begin{figure}
\begin{center}
\includegraphics[clip,scale=1.0]{./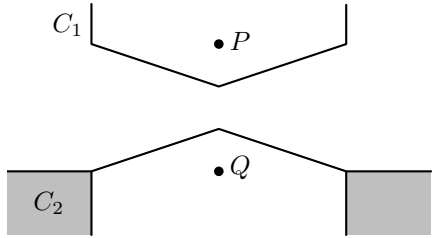}
\caption{A $3$-sector $(C _1, C _2)$ of $P$ and $Q$ 
under the $\ell_1$ norm.}
 \label{figure: l1 non-unique trisector}
\end{center}
\end{figure}

Thus, uniqueness of $k$-sectors requires some geometric assumptions
on the underlying metric space. We will further comment
on this issue in Section~\ref{section: constructive}.

\paragraph{\boldmath Construction of $k$-sectors}
Our  existence proof for $k$-sectors, based on the Knaster--Tarski
theorem, is somewhat nonconstructive. 
In Section~\ref{section: constructive}, we discuss a more constructive
approach, which re-establishes Proposition~\ref{p:gradations}
under stronger assumptions, but
which yields an iterative algorithm (in a similar
spirit as in \cite{asano07:_zone_diagr}). We have no rigorous
results about the speed of its convergence, but in practice
it has been used successfully for approximating
$k$-sectors and drawing pictures such as 
Figure~\ref{figure: k-sector example}. Such computations
also support our belief that $k$-sectors in
Euclidean spaces are unique, at least for two
point sites in the plane.

\section{\boldmath The existence of $k$-gradations}
\label{section: main proof}

Here we prove Proposition~\ref{p:gradations}.
A set $\mathcal L$ equipped with a partial order $\mathord\leq$ is called a 
\emph{complete lattice} 
if every subset $\mathcal D \subseteq \mathcal L$ has 
an infimum~$\bigwedge \mathcal D$ 
(the greatest $x \in \mathcal L$ such that 
$x \leq y$ for all $y \in \mathcal D$) and 
a supremum~$\bigvee \mathcal D$ 
(the least $x \in \mathcal L$ such that 
$x \geq y$ for all $y \in \mathcal D$). 
We say that a function $F \colon \mathcal L \to \mathcal L$ 
on a complete lattice~$\mathcal L$ is 
\emph{monotone}
if $x \leq y$ implies $F (x) \leq F (y) $. 

\begin{tarskitheorem}[\cite{tarski55:_lattic_theor_fixpoin_theor_and_its_applic}]
Every monotone function on a complete lattice has a fixed point. 
\end{tarskitheorem}

The proof of this theorem is simple: 
It is routine to verify that 
the least and the greatest fixed points of 
a monotone function $F \colon \mathcal L \to \mathcal L$ are given by 
\begin{align}
\label{equation: tarski fixed points}
 & \bigwedge \{\, x \in \mathcal L : x \geq F (x) \,\}, &
 & \bigvee \{\, x \in \mathcal L : x \leq F (x) \,\}, 
\end{align}
respectively. 

\begin{proof}[Proof of Proposition~\ref{p:gradations}.]
Let $\mathcal L$ be the set of all 
$(k - 1)$-tuples $(R _i, S _i) _i$ of 
pairs of subsets of the considered
metric space $M$ 
satisfying $R _i \supseteq P$, $S _i \supseteq Q$ and $R _i \cup S _i = M$. 
We
define the order~$\mathord\leq$ on $\mathcal L$ by setting 
$(R _i, S _i) _i \leq (R' _i, S' _i) _i$ if
$R _i \subseteq R' _i$ and $S _i \supseteq S' _i$ for all $i = 1, \ldots, k - 1$. 
It is easy to see that $\mathcal L$ with this order~$\mathord\leq$ 
is a complete lattice in which 
the infimum and supremum of $\mathcal D \subseteq \mathcal L$ are given by 
\begin{align}
\label{equation: infimum and supremum}
 \bigwedge \mathcal D 
&
=
 \Bigl(
  \bigcap _{(R _j, S _j) _j \in \mathcal D} R _i, 
  \bigcup _{(R _j, S _j) _j \in \mathcal D} S _i
 \Bigr) _i, 
&
 \bigvee \mathcal D 
&
=
 \Bigl(
  \bigcup _{(R _j, S _j) _j \in \mathcal D} R _i, 
  \bigcap _{(R _j, S _j) _j \in \mathcal D} S _i
 \Bigr) _i. 
\end{align}
We define $F \colon \mathcal L \to \mathcal L$ by 
\begin{equation}
\label{equation: definition of F}
 F \bigl( (R _i, S _i) _i \bigr) 
=
 \bigl( \dom (R _{i - 1}, S _{i + 1}), \dom (S _{i + 1}, R _{i - 1}) \bigr) _i, 
\end{equation}
where $R _0 = P$ and $S _k = Q$. 
It is easy to see that $F$ is well-defined and monotone. 
By the Knaster--Tarski Theorem, 
$F$ has a fixed point, 
which is a $k$-gradation by definition. 
\end{proof}

\section{\boldmath Dominance regions, $k$-gradations, and $k$-sectors}
\label{section: gradation}

The goal of this section is to  prove Proposition~\ref{p:k-sector is a boundary}. 
We write $\bd Z$ for the boundary of a closed set $Z$.
We begin with observing that, 
for arbitrary nonempty sets $X$, $Y$ in any metric space, 
the set $\bisect (X, Y) = \dom (X, Y) \cap \dom (Y, X)$
contains $\bd \dom (X, Y)$. 
Moreover, 
if the metric space is geodesic (and hence connected), then 
$\bisect (X, Y)$ is nonempty. 
For otherwise, $\dom (X, Y)$ and $\dom (Y, X)$ would be two disjoint closed sets 
covering the whole space. 

\begin{lemma}
\label{lemma: properties of dom}
Let $X$, $Y$, $Z$ be nonempty closed sets in a proper geodesic metric space.
Note that $D = \dom (X, Y)$ and $C = \bisect (X, Y)$ are nonempty. 
If $D$ and $Z$ are disjoint, then
\begin{enumerate}
\renewcommand{\theenumi}{\textup{(\alph{enumi})}}
\renewcommand{\labelenumi}{\theenumi}
\item \label{enumi: dominance region of the boundary}
$
  \dom (D, Z) = \dom (C, Z)
$, $
  \dom (Z, D) = \dom (Z, C)
$, 
\item \label{enumi: bisector of the boundary}
$
  \bisect (D, Z) = \bisect (C, Z)
$. 
\end{enumerate}
\end{lemma}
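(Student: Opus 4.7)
The plan is to deduce both parts from a single, clean geometric identity: $\dist(z, C) = \dist(z, D)$ for every $z \notin D$, together with the strict inequality $\dist(z, C) < \dist(z, Z)$ for every $z \in D$. Granting these, the lemma follows by a straightforward case split, since everything reduces to comparing $\dist(\cdot, D)$ and $\dist(\cdot, C)$. Two standing observations support the approach: first, $C = \dom(X,Y) \cap \dom(Y,X) \subseteq \dom(X,Y) = D$, so automatically $\dist(z,D) \le \dist(z,C)$ for all $z$; second, by the remark preceding the lemma, $\bd D \subseteq C$. Both $C$ and $D$ are closed, as preimages of closed sets under the continuous map $z \mapsto \dist(z,X) - \dist(z,Y)$.

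For the first identity, fix $z \notin D$. Properness gives $d \in D$ with $\dist(z,d) = \dist(z,D)$ (take the intersection of $D$ with a closed ball around $z$, which is compact), and the geodesic assumption yields a metric segment $\gamma\colon [0,L] \to M$ from $z$ to $d$ with $L = \dist(z,D)$. Since $z \notin D$, $d \in D$, and $D$ is closed, the set $\{t : \gamma(t) \in D\}$ has a minimum $t_0 \in (0,L]$, and $\gamma(t_0) \in \bd D \subseteq C$. Hence $\dist(z,C) \le t_0 \le L = \dist(z,D)$, which combined with the reverse inequality gives equality.

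For the strict inequality on $D$, fix $z \in D$. Disjointness of the closed sets $D$ and $Z$ gives $\dist(z, Z) > 0$, and properness supplies $z^* \in Z$ with $\dist(z, z^*) = \dist(z, Z)$; take a metric segment $\gamma\colon [0, L] \to M$ from $z$ to $z^*$. Since $z^* \in Z$ and $D \cap Z = \emptyset$, the supremum $t_0$ of $\{t : \gamma(t) \in D\}$ is strictly less than $L$; closedness of $D$ forces $\gamma(t_0) \in \bd D \subseteq C$. Thus $\dist(z,C) \le t_0 < L = \dist(z,Z)$, and the strict gap here is exactly what the disjointness $D \cap Z = \emptyset$ buys us.

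Part (a) now follows by splitting on whether $z \in D$. Outside $D$ the equality $\dist(z,C) = \dist(z,D)$ makes both $\dom(D,Z) = \dom(C,Z)$ and $\dom(Z,D) = \dom(Z,C)$ trivial. Inside $D$ every $z$ has $\dist(z,D) = 0$ and $\dist(z, C) < \dist(z, Z)$, so $z$ lies in both $\dom(D,Z)$ and $\dom(C,Z)$, while $z$ lies in neither $\dom(Z,D)$ (because $z \notin Z$ gives $\dist(z,Z) > 0 = \dist(z,D)$) nor $\dom(Z,C)$ (by the strict inequality). Part (b) is then immediate: $\bisect(D,Z) = \dom(D,Z) \cap \dom(Z,D) = \dom(C,Z) \cap \dom(Z,C) = \bisect(C,Z)$. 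The delicate step is the first identity, where properness (for existence of nearest points) and the geodesic structure (so the first entry into the closed set $D$ lies on $\bd D$, hence in $C$) both enter essentially; the rest is bookkeeping.
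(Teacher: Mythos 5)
Your proof is correct and follows essentially the same route as the paper: establish $\dist(z,C)=\dist(z,D)$ off $D$ and the strict inequality $\dist(z,C)<\dist(z,Z)$ on $D$ via metric segments meeting $\bd D\subseteq C$, then conclude (a) by a case split and (b) from (a). The only cosmetic differences are that you locate the explicit first-entry/last-exit point on the segment where the paper just invokes connectedness, and you use a nearest point of $D$ where the paper takes an arbitrary point of $D$ and passes to the infimum.
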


\begin{proof}
Part~\ref{enumi: bisector of the boundary} follows from 
\ref{enumi: dominance region of the boundary} 
using $\bisect(X,Y)=\dom(X,Y)\cap\dom(Y,X)$. 

To show \ref{enumi: dominance region of the boundary}, 
we claim that 
\begin{align}
\label{equation: dom* and bisect look the same from inside}
  \dist (a, Z) 
&
 >
  \dist (a, C) 
&
 \text{for all} \ a \in D. 
\end{align}
Indeed,  let $z \in Z$ be a point attaining the distance to $a$;
i.e., $\dist (a, z) = \dist (a, Z)$ (the distance is attained
since the intersection of $Z$ with the ball of radius
$2\dist(a,Z)$ around $a$ is compact).
There is a segment connecting $a$ and $z$---that is, 
a metric segment (see the definition following the Main Theorem); 
for $\Rset ^d$ this simply means a line segment. 
The segment is a connected set 
containing both $a \in D$ and $z \notin D$, 
so it meets $\bd D$, and thus
also $C$, at some point, say $c$.
Hence, $
 \dist(a,z)
= 
 \dist(a,c)+\dist(c,z)
>
 \dist(a, c)
\geq
 \dist(a, C)
$. 
We also have
\begin{align}
\label{equation: dom and bisect look the same from outside}
 \dist (a, C) & = \dist (a, D) & \text{for all} \ a \notin D. 
\end{align}
For let $d \in D$ be arbitrary. 
Again, there is a segment connecting $a$ and $d$, 
and it meets $\bd D$, and thus 
also $C$, at some point, say $c$. 
Hence, $
 \dist(a, d)
= 
 \dist(a, c) + \dist(c, d)
\geq
 \dist(a, c)
\geq
 \dist(a, C)
$.  Since $C \subseteq D$, this proves
\eqref{equation: dom and bisect look the same from outside}. 

The first part
of \ref{enumi: dominance region of the boundary} 
comes as follows: 
Points $a \in D$ belong both to $\dom (D, Z)$ and, 
by \eqref{equation: dom* and bisect look the same from inside}, 
to $\dom (C, Z)$; 
other points $a \notin D$ belong to $\dom (D, Z)$ 
and $\dom (C, Z)$ at the same time
by \eqref{equation: dom and bisect look the same from outside}. 

The second part is similar: 
Points $a \in D$ belong neither to $\dom (Z, D)$ nor 
to $\dom (Z, C)$ by \eqref{equation: dom* and bisect look the same from inside}; 
other points $a \notin D$ belong to $\dom (Z, D)$ and $\dom (Z, C)$ at the same time
by \eqref{equation: dom and bisect look the same from outside}. 
\end{proof}

Now we proceed with $k$-gradations.
Let $(R_i,S_i)_i$ be a $k$-gradation for $P$ and $Q$
as in Proposition~\ref{p:k-sector is a boundary}.
We observe that  
$R _i \cup S _i$ is the whole space and that
\begin{align}
\label{equation: chain of inclusions}
& P = R _0 \subseteq R _1 \subseteq \dots \subseteq R _{k - 1}, 
& 
& S _1 \supseteq S _2 \supseteq \dots \supseteq S _k = Q,
\end{align}
because $X \subseteq \dom (X, Y)$. 

\begin{lemma}
\label{lemma: separated} 
Let $P$, $Q$ be nonempty, disjoint, closed sets in
an arbitrary metric space. 
\begin{enumerate}
\renewcommand{\theenumi}{\textup{(\roman{enumi})}}
\renewcommand{\labelenumi}{\theenumi}
\item \label{enumi: k-sector curves are separated 2}
If $(C _i) _i$ is a $k$-sector 
of $C _0 = P$ and $C _k = Q$, then 
$C _{i - 1}$ and $C _{i + 1}$ are disjoint 
for each $i = 1$, \ldots, $k - 1$. 
\item \label{enumi: gradation sets are separated}
If $(R _i, S _i) _i$ is a $k$-gradation
between $R _0 = P$ and $S _k = Q$, then
$R _i$ and $S _j$ are disjoint for each $i$ and $j$ 
with $0 \leq i < j \leq k$. 
\end{enumerate}
\end{lemma}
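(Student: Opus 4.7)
The plan is to derive both parts by iteratively sliding a hypothetical common point toward $P$ or toward $Q$ using the defining equations. First I would record the key preliminary observation that $\dom(X,Y)$ is always a closed set---since $\dist(\cdot,X)=\dist(\cdot,\closure X)$ realizes it as a sublevel set of the continuous function $\dist(\cdot,\closure X)-\dist(\cdot,\closure Y)$---so $\bisect(X,Y)$ is closed too. Consequently every $R_i$, $S_i$ in a $k$-gradation, and every $C_i$ with $1\le i\le k-1$ in a $k$-sector, is automatically closed, and $\dist(z,A)=0$ then forces $z\in A$ for each of these sets.

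For part~\ref{enumi: gradation sets are separated}, I would argue by contradiction. Suppose $z\in R_i\cap S_j$ with $i<j$; the chain~\eqref{equation: chain of inclusions} gives $S_j\subseteq S_{i+1}$, so $\dist(z,S_{i+1})=0$. If $i\ge 1$, the defining equation $R_i=\dom(R_{i-1},S_{i+1})$ then forces $\dist(z,R_{i-1})=0$, and closedness yields $z\in R_{i-1}$. Symmetrically, when $j\le k-1$, combining $R_i\subseteq R_{j-1}$ with $S_j=\dom(S_{j+1},R_{j-1})$ produces $z\in S_{j+1}$. Iterating these two moves drives $(i,j)$ to $(0,k)$, where $z\in R_0\cap S_k=P\cap Q=\emptyset$, a contradiction.

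For part~\ref{enumi: k-sector curves are separated 2}, I would again argue by contradiction: if $z\in C_{i-1}\cap C_{i+1}$, then $\dist(z,C_{i-1})=\dist(z,C_{i+1})=0$ puts $z\in\bisect(C_{i-1},C_{i+1})=C_i$. The analogous slide is: whenever $z\in C_j\cap C_{j-1}$ with $j\ge 2$, the equation $C_{j-1}=\bisect(C_{j-2},C_j)$ combined with $\dist(z,C_j)=0$ gives $z\in C_{j-2}$. Iterating downward from $z\in C_i\cap C_{i-1}$ deposits $z$ in $C_0=P$, and the symmetric upward slide deposits $z$ in $C_k=Q$, yielding $z\in P\cap Q=\emptyset$; the edge cases $i=1$ or $i=k-1$ already place $z$ in $P$ or $Q$, so only the remaining slide is needed. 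I expect no serious obstacle---both arguments reduce to the same two-step iteration---but one must take some care with boundary indices so that each slide terminates precisely at $P$ or $Q$.
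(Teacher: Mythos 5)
Your proof is correct and follows essentially the same strategy as the paper's: using the defining equations together with the closedness of dominance regions and bisectors to propagate membership of a hypothetical common point along the chain of indices. The only cosmetic difference is that the paper fixes a direction (via ``$a\notin P$ or $a\notin Q$'') and stops at the first index where membership fails, whereas you propagate in both directions all the way to $P$ and $Q$ and contradict their disjointness there; both versions rely on exactly the same one-step slide.
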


\begin{proof}
Suppose that there is a point $a \in C _{i - 1} \cap C _{i + 1}$. 
Since $\dist (a, C _{i - 1}) = 0 = \dist (a, C _{i + 1})$, 
we have $a \in \bisect (C _{i - 1}, C _{i + 1}) = C _i$.  
Since $P$ and $Q$ are disjoint, 
either $a \notin P$ or $a \notin Q$. 
By symmetry, we may assume $a \notin P$. 
Let $i ^-$ be the smallest such that 
$a \in C _j$ for all $j = i ^-$, \ldots, $i$. 
Then $a \in C _{i ^- + 1} \setminus C _{i ^- - 1}$, 
contradicting
$a \in C _{i ^-} = \bisect (C _{i ^- - 1}, C _{i ^- + 1})$. 

For \ref{enumi: gradation sets are separated}, 
suppose that there is a point $a \in R _i \cap S _j$ for some $i < j$. 
Since $P$ and $Q$ are disjoint, 
either $a \notin P$ or $a \notin Q$. 
By symmetry, we may assume $a \notin P$. 
Retake $i$ to be the smallest such that $a \in R _i$. 
Then 
$a \notin R _{i - 1}$ and 
$a \in S _j \subseteq S _{i + 1}$, 
contradicting $
a \in R _i = \dom (R _{i - 1}, S _{i + 1})
$.  
\end{proof}

\begin{proof}[Proof of Proposition~\ref{p:k-sector is a boundary}.]
For one direction, 
let $(R _i, S _i) _i$ be 
a $k$-gradation and let $C _i = R _i \cap S _i$ 
for each $i = 1$, \ldots, $k - 1$. 
Then $
  C _i 
 =
  \dom(R _{i - 1}, S _{i + 1}) \cap \dom(S _{i + 1},R _{i - 1})
 =
  \bisect (R _{i - 1}, S _{i + 1})
$ is nonempty. 
Moreover, this equals $
 \bisect (C _{i - 1}, C _{i + 1})
$ by Lemma~\ref{lemma: properties of dom}\ref{enumi: bisector of the boundary}, 
because $R _{i - 1}$ and $S _{i + 1}$ are disjoint according
to Lemma~\ref{lemma: separated}\ref{enumi: gradation sets are separated}.

For the other direction, 
we suppose that $(C _i) _i$ is a $k$-sector. 
Let $R _i = \dom (C _{i - 1}, C _{i + 1})$ 
and $S _i = \dom (C _{i + 1}, C _{i - 1})$ 
for each $i = 1$, \dots, $k - 1$.
Then $C_i=R_i\cap S_i$ by the definition of a $k$-sector. 
By Lemma~\ref{lemma: separated}\ref{enumi: k-sector curves are separated 2}, 
we have
$R _i\cap C _{i + 1}=\emptyset$, 
and similarly $S _{i + 1}\cap C _i=\emptyset$. 
Therefore,  $R _i \cap S _{i + 1}$ is disjoint from $
 C _i \cup C _{i + 1} 
\supseteq 
 \bd R _i \cup \bd S _{i + 1} 
\supseteq
 \bd (R _i \cap S _{i + 1})
$.  This means that $R_i \cap S _{i + 1}$ has an empty boundary, 
and thus is itself empty, because the whole space is geodesic and hence connected. 
By this and the fact that $R _i \cup S _i$ covers the whole space, 
we have $P\subseteq R_1\subseteq\cdots\subseteq
R_{k-1}$ and 
$S_1\supseteq S_2\supseteq\cdots\supseteq S_{k-1}\supseteq Q$.
Because $R _i$ and $S _{i + 1}$ are disjoint, 
so are $R _{i - 1}$ and $S _{i + 1}$. 
This allows us to apply Lemma~\ref{lemma: properties of dom}\ref{enumi: dominance region of the boundary},
which yields $
 \dom (R _{i - 1}, S _{i + 1})
=
 \dom (C _{i - 1}, C _{i + 1})
=
 R _i
$ and similarly $
 \dom (S _{i + 1}, R _{i - 1})
=
 S _i
$. 
\end{proof}

The following example shows that 
the assumption of the space being geodesic cannot be dropped. 
Consider the distance on $\Rset$ defined by
$\dist (x, y) = f (\lvert x - y \rvert)$,
where 
$f$ is given by 
\begin{equation}
f (r) =
\begin{cases}
 r & \text{if} \ r \leq 1, \\
 1 & \text{if} \ 1 \leq r \leq 2, \\
 r/2 & \text{if} \ r \geq 2.
\end{cases}
\end{equation}
Thus, $d$ is almost like the usual metric, 
except that it ``thinks of any distance between $1$ and $2$ as the same.''
Then there is no trisector between
$P = (-\infty, 0]$ and $Q = [1, +\infty)$ 
(whereas there is a gradation by Proposition~\ref{p:gradations}). 
For suppose that $(C _1, C _2)$ is a trisector. 
By Lemma~\ref{lemma: separated}\ref{enumi: k-sector curves are separated 2}, 
the set~$C _2$ cannot overlap $P$ or $Q$, 
so it is a nonempty subset of $(0, 1)$. 
Hence, the point~$2$ is equidistant from $C _2$ and $P$, 
and thus belongs to $C _1$.  
This contradicts 
Lemma~\ref{lemma: separated}\ref{enumi: k-sector curves are separated 2}.

\section{\boldmath Drawing $k$-sectors}
\label{section: constructive}

Here we provide a more constructive proof of the existence
of $k$-gradations,  but only under stronger assumptions than in
Proposition~\ref{p:gradations}. 
Later we discuss how this
approach can be used for approximate computation of bisectors.
We write $\closure X$ for the closure of a set~$X$. 

\begin{prop}
\label{p:it-fix}
Suppose that $P$ and $Q$ are disjoint nonempty closed sets in
$\Rset ^d$ with the Euclidean norm (or, more generally,
with an arbitrary strictly convex norm).
Let the lattice $\mathcal L$ and 
the function $F \colon \mathcal L \to \mathcal L$ be as in the
the proof of Proposition~\ref{p:gradations} (Section~\ref{section: main proof}). 
Let $(R ^0 _i, S ^0 _i) _i$ be an arbitrary element of $\mathcal L$ with
$(R ^0 _i, S ^0 _i) _i \leq F((R ^0 _i, S ^0 _i) _i)$. 
Define $(R ^{n + 1} _i, S ^{n + 1} _i) _i := F ((R ^n _i, S ^n _i) _i)$ 
for each $n \in \Nset$ (thus, $
(R ^0 _i, S ^0 _i) _i \leq (R ^1 _i, S ^1 _i) _i \leq (R ^2 _i, S ^2 _i) _i \leq \cdots
$), and let $
 (R ^\infty _i, S ^\infty _i) _i 
=
 \bigvee \{\, (R ^n _i, S ^n _i) _i : n \in \Nset \,\}
$. 
Then $(\closure{R ^\infty _i}, S ^\infty _i) _i$ is a $k$-gradation.
\end{prop}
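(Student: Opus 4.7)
The plan is to take the supremum of the Tarski iterates in $\mathcal{L}$ and verify, after closing the $R$-components, that the result satisfies the gradation identities, using strict convexity only at the very last step. Monotonicity of $F$ (as in the proof of Proposition~\ref{p:gradations}) together with the hypothesis $(R^0, S^0) \le F(R^0, S^0)$ yields an increasing chain $(R^n_i, S^n_i)_i$, with Tarski supremum $R^\infty_i = \bigcup_n R^n_i$ and $S^\infty_i = \bigcap_n S^n_i$. For $n \ge 1$, each $R^n_i$ and $S^n_i$ is closed, being a sublevel set of the continuous function $z \mapsto \dist(z, X) - \dist(z, Y)$; consequently $S^\infty_i$ is closed, whereas $R^\infty_i$ need not be. The pair $(\overline{R^\infty_i}, S^\infty_i)_i$ still belongs to $\mathcal{L}$: any $z$ not in $S^\infty_i$ lies in some $R^n_i \subseteq R^\infty_i$, giving $R^\infty_i \cup S^\infty_i = M$.

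The main technical ingredient is convergence of the two distance functions along the nested sequences. For the increasing unions, $\dist(z, R^n_{i-1}) \searrow \dist(z, R^\infty_{i-1}) = \dist(z, \overline{R^\infty_{i-1}})$ is immediate. For the decreasing intersections, I use properness: since $Q \subseteq S^n_{i+1}$ uniformly, any sequence of approximate nearest points to $z$ in $S^n_{i+1}$ stays in a fixed compact ball around $z$, so a subsequence converges and its limit lies in $\bigcap_m S^m_{i+1} = S^\infty_{i+1}$ by closedness of each $S^m_{i+1}$; this yields $\dist(z, S^n_{i+1}) \nearrow \dist(z, S^\infty_{i+1})$ with the limit attained.

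With these convergences in hand, the $S$-equation $S^\infty_i = \dom(S^\infty_{i+1}, \overline{R^\infty_{i-1}})$ follows directly, since $z \in \bigcap_{n \ge 1} S^n_i$ translates to the family of inequalities $\dist(z, S^{n-1}_{i+1}) \le \dist(z, R^{n-1}_{i-1})$, which by monotone convergence is equivalent to the single inequality $\dist(z, S^\infty_{i+1}) \le \dist(z, \overline{R^\infty_{i-1}})$. The analogous argument for the $R$-side gives the forward inclusion $R^\infty_i \subseteq \dom(\overline{R^\infty_{i-1}}, S^\infty_{i+1})$, and since the right-hand side is closed, also $\overline{R^\infty_i} \subseteq \dom(\overline{R^\infty_{i-1}}, S^\infty_{i+1})$.

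The hard part, and the only place where strict convexity of the norm will be invoked, is the reverse inclusion $\dom(\overline{R^\infty_{i-1}}, S^\infty_{i+1}) \subseteq \overline{R^\infty_i}$. A point $z$ with strict domination already lies in some finite-stage $R^{n+1}_i \subseteq R^\infty_i$ by the convergence above; the delicate case is the equality $\dist(z, \overline{R^\infty_{i-1}}) = \dist(z, S^\infty_{i+1}) = d$, which must be handled by density. My plan is to take a nearest $r \in \overline{R^\infty_{i-1}}$ (attained by properness) and push $z$ a small distance $t$ toward $r$ to obtain $z_t$; strict convexity makes the reverse triangle inequality $\|z_t - s\| \ge \|z - s\| - t$ strict for every $s \in S^\infty_{i+1}$ not lying on the ray from $z$ through $r$ in the right orientation, and the exceptional collinear case $s = r$ is ruled out by disjointness of $\overline{R^\infty_{i-1}}$ and $S^\infty_{i+1}$, which needs to be established separately (tracing back to disjointness of $P$ and $Q$ through the structure of the iteration). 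Making this density-of-strict-domination argument rigorous in a strictly convex normed space is the technical core that I expect to require the most care.
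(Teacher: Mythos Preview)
Your plan is correct and follows essentially the same route as the paper: both get the $S$-identity by monotone convergence of the distances $\dist(z,R^n_{i-1})$ and $\dist(z,S^n_{i+1})$, and for the hard reverse inclusion on the $R$-side both push $z$ toward a nearest point of $\overline{R^\infty_{i-1}}$ and use strict convexity to force strict domination nearby---the paper simply packages this step as a standalone lemma $\dom(Y,X)=\overline{\Rset^d\setminus\dom(X,Y)}$ for disjoint closed $X,Y$. The disjointness of $\overline{R^\infty_{i-1}}$ and $S^\infty_{i+1}$ that you flag as needing separate proof is exactly the remaining ingredient; the paper establishes it by the halving estimate $\dist(a,R^n_{i-1})\le 2\,\dist(a,R^{n+1}_i)$ (and its $S$-analogue), which propagates membership down to $P\cap Q=\emptyset$.
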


We begin proving this proposition. 
We write $R ^n _0 = P$ and $S ^n _k = Q$ for each $n \in \Nset \cup \{\infty\}$. 

\begin{lemma}
\label{lemma: bisector is the boundary of the dominance region}
For any disjoint nonempty closed sets $X$, $Y$ in $\Rset ^d$ 
with the Euclidean metric (or with a strictly convex norm), $
\dom (Y, X) = \closure{\Rset ^d \setminus \dom (X, Y)}
$. 
\end{lemma}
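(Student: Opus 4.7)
The plan is to prove the two inclusions separately. The inclusion $\dom(Y, X) \supseteq \closure{\Rset^d \setminus \dom(X, Y)}$ is routine: the complement $\Rset^d \setminus \dom(X, Y)$ equals $\{z : \dist(z, X) > \dist(z, Y)\}$, which lies inside the closed set $\dom(Y, X)$, so its closure does as well. The substance of the lemma is the reverse inclusion, which reduces to showing that every $z \in \bisect(X, Y)$ is a limit of points $z'$ with $\dist(z', X) > \dist(z', Y)$; for $z$ with $\dist(z,Y) < \dist(z,X)$, the point $z$ itself already lies in $\Rset^d \setminus \dom(X, Y)$.

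Fix $z \in \bisect(X, Y)$ and let $r = \dist(z, X) = \dist(z, Y)$; since $X$ and $Y$ are disjoint closed sets, $r > 0$. Because closed balls in $\Rset^d$ are compact, there is a point $y \in Y$ attaining $\|z - y\| = r$. I would approximate $z$ by points on the segment from $z$ to $y$: set $z_t = (1 - t) z + t y$ for $t \in (0, 1]$. Then $z_t \to z$ as $t \to 0$, and $\dist(z_t, Y) \leq \|z_t - y\| = (1 - t) r$, so it suffices to establish the strict inequality $\dist(z_t, X) > (1 - t) r$ for every such $t$; this will place each $z_t$ in $\Rset^d \setminus \dom(X, Y)$ and finish the argument.

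The main obstacle is this strict inequality, and it is where strict convexity of the norm becomes essential. Suppose for contradiction that $\|z_t - x\| \leq (1 - t) r$ for some $x \in X$. Combining this with $\|z - x\| \geq r$ and the triangle inequality forces $\|z - x\| = r$. Setting $u = z - x$ and $w = y - x$, one has $\|u\| = r$, $\|w - u\| = \|y - z\| = r$, and, after dividing through by $1 - t$, $\|u + s w\| \leq r$ with $s = t/(1 - t) > 0$. Rewriting $u + s w = (1 + s) u + s(w - u)$ and applying the reverse triangle inequality yields $\|u + s w\| \geq (1 + s) r - s r = r$, so equality holds throughout. Strict convexity of the norm then converts this equality into the condition that $s(w - u)$ is a nonpositive scalar multiple of $(1 + s) u$; together with $\|w - u\| = \|u\| = r$, this pins down $w = 0$, i.e., $x = y$, contradicting $X \cap Y = \emptyset$. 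Consequently $\dist(z_t, X) > (1 - t) r$ for every $t \in (0, 1]$, and taking $t \to 0$ completes the proof.
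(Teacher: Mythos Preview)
Your proof is correct and follows the same strategy as the paper: choose a nearest point $y\in Y$ to $z$ and show that every point $z_t\neq z$ on the segment from $z$ to $y$ is strictly closer to $y$ than to $X$, hence lies outside $\dom(X,Y)$. The paper argues this geometrically (the ball of radius $\|z_t-y\|$ about $z_t$ sits inside the ball of radius $\|z-y\|$ about $z$, whose interior misses $X$), while you spell out the strict-convexity computation explicitly; your route via the substitution $u+sw=(1+s)u+s(w-u)$ is a bit more circuitous than simply reading off $\|z-x\|=\|z-z_t\|+\|z_t-x\|$ and invoking the equality case of the triangle inequality, but it reaches the same conclusion $x=y$.
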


\begin{figure}
\begin{center}
\includegraphics[clip,scale=1.0]{./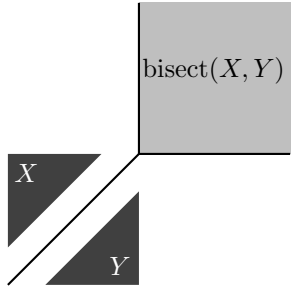}
\end{center}
\caption{A bisector may be fat
         in the plane with the $\ell_1$ metric. 
         Every point in the shaded region
         is at the same distance from $X$ and $Y$.
         The equation in
         Lemma~\ref{lemma: bisector is the boundary of the dominance region}
         does not hold.}
\label{figure: fat bisector}
\end{figure}

We note that the assumption on the considered metric in this lemma
is necessary: 
As Figure~\ref{figure: fat bisector} illustrates,
the claim is not valid  with the $\ell_1$ norm.

\begin{proof}[Proof of Lemma~\ref{lemma: bisector is the boundary of the dominance region}]
We have $\dom (Y, X) \supseteq \closure{\Rset ^d \setminus \dom (X, Y)}$
because 
$\dom (Y, X)$ is closed and $\dom (Y, X) \cup \dom (X, Y) = \Rset ^d$. 
For the other inclusion, 
let $z \in \dom (Y, X)$ and 
let $y$ be a closest point in $Y$ to $z$. 
Since $X$ does not intersect
the open ball with centre~$z$ and radius~$\dist(y,z)$, 
any point $z' \neq z$ on the segment $z y$ 
is strictly closer to $y$ than to $X$ 
(Figure~\ref{figure: dom and bisect}), and thus 
is not in $\dom (X, Y)$. 
Since $z'$ can be arbitrarily close to $z$, 
we have $z \in \closure{\Rset ^d \setminus \dom (X, Y)}$. 
\begin{figure}
\begin{center}
\includegraphics[clip,scale=1.0]{./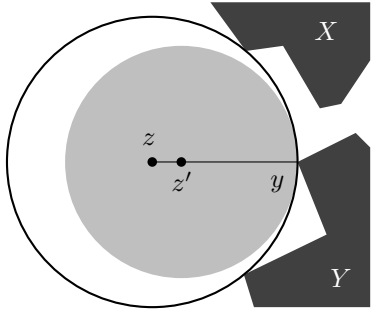}
\end{center}
\caption{Since $X$ does not intersect the interior of the ball around $z$, 
it does not touch the ball around $z'$.}
\label{figure: dom and bisect}
\end{figure}
\end{proof}

\begin{lemma}\label{eq12}
If $(R ^\infty _i, S ^\infty _i) _i$ is as in Proposition~\ref{p:it-fix},
then $\closure{R ^\infty _i} \cap S ^\infty _j=\emptyset$
whenever $0 \leq i < j \leq k$. 
\end{lemma}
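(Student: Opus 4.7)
The plan is to induct on $i$ from $0$ to $k-1$, showing that $\closure{R^\infty_i}\cap S^\infty_j=\emptyset$ for every $j$ with $i<j\le k$. Two preliminary facts will do most of the work. First, since $F$ is monotone and $(R^\infty,S^\infty)=\bigvee_n(R^n,S^n)$, the supremum is a post-fixed point of $F$: $R^\infty_i\subseteq\dom(R^\infty_{i-1},S^\infty_{i+1})$, and since a dominance region is closed this strengthens to $\closure{R^\infty_i}\subseteq\dom(R^\infty_{i-1},S^\infty_{i+1})$. Second, $S^\infty$ is decreasing in its index: applying $Y\subseteq\dom(Y,X)$ to $S^{n+1}_{i+1}=\dom(S^n_{i+2},R^n_i)$ gives $S^{n+1}_{i+1}\supseteq S^n_{i+2}$, and intersecting over $n$ yields $S^\infty_{i+1}\supseteq S^\infty_{i+2}$; hence $S^\infty_j\subseteq S^\infty_{i+1}$ whenever $j\ge i+1$. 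Note that $S^\infty_{i+1}$ is itself closed, being an intersection of closed dominance regions.

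For the base case $i=0$, where $\closure{R^\infty_0}=P$, I would show $P\cap S^\infty_j=\emptyset$ for every $j\ge 1$ as follows. If $z\in P\cap S^\infty_j$, then $R^{n-1}_{j-1}\supseteq P$ (valid for every element of $\mathcal L$) forces $\dist(z,R^{n-1}_{j-1})=0$; combined with $z\in S^n_j=\dom(S^{n-1}_{j+1},R^{n-1}_{j-1})$ this gives $\dist(z,S^{n-1}_{j+1})=0$, so $z\in S^{n-1}_{j+1}$ whenever $n\ge 2$ (the set is then a closed dominance region, or equals $Q$ if $j+1=k$). Letting $n$ vary, $z\in S^\infty_{j+1}$; iterating this reasoning $k-j$ times pushes $z$ into $S^\infty_k=Q$, contradicting $z\in P$.

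For the inductive step, assume the claim for $i-1$, and let $z\in\closure{R^\infty_i}\cap S^\infty_j$ with $j>i\ge 1$. The first preparatory fact gives $\dist(z,R^\infty_{i-1})\le\dist(z,S^\infty_{i+1})$. The second, together with the closedness of $S^\infty_{i+1}$, gives $\dist(z,S^\infty_{i+1})=0$ since $z\in S^\infty_j\subseteq S^\infty_{i+1}$. Hence $\dist(z,R^\infty_{i-1})=0$, so $z\in\closure{R^\infty_{i-1}}$, contradicting the induction hypothesis applied to the pair $(i-1,j)$.

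The main obstacle is the base case: because $(R^\infty,S^\infty)$ need not be a genuine gradation, one cannot directly invoke Lemma~\ref{lemma: separated}, and the exit of $P$ from $S^\infty_j$ has to be propagated index-by-index through the explicit definition of $F$. Once that is in hand, the inductive step is an essentially automatic consequence of the post-fixed-point inequality and the monotonicity of $S^\infty$ in its index.
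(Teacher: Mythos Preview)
Your argument is correct. Both proofs ultimately push a hypothetical $a\in\closure{R^\infty_i}\cap S^\infty_j$ down to $P$ and up to $Q$, but the mechanics differ. The paper works entirely at the finite-stage level: from $a\in S^n_{i+1}$ it extracts the quantitative estimate $\dist(a,R^n_{i-1})\le 2\,\dist(a,R^{n+1}_i)$ and lets $n\to\infty$ to conclude $a\in\closure{R^\infty_{i-1}}$, iterating this down to $P$; the push up to $Q$ is done symmetrically. Your inductive step instead uses the purely order-theoretic fact that the supremum of the iteration is a post-fixed point of $F$, giving $\closure{R^\infty_i}\subseteq\dom(R^\infty_{i-1},S^\infty_{i+1})$ directly at the limit, together with $S^\infty_j\subseteq S^\infty_{i+1}$; this bypasses the factor-$2$ computation and is arguably cleaner. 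Your base case $i=0$, on the other hand, essentially coincides with the paper's ``push $j$ up'' half, working through the finite-stage identities $S^n_j=\dom(S^{n-1}_{j+1},R^{n-1}_{j-1})$. So the novelty is concentrated in the inductive step: you trade a small metric estimate for a lattice-theoretic observation plus closedness of dominance regions.
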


\begin{proof}
For contradiction, suppose that there is some $
a \in \closure{R ^\infty _i} \cap S ^\infty _j 
$. 

If $i > 0$, then for each $n \in \Nset$ 
we have $
a \in S ^\infty _j \subseteq S ^n _j \subseteq S ^n _{i + 1}
$, so $
 \dom (R ^n _{i - 1}, \{a\})
\supseteq
 \dom (R ^n _{i - 1}, S ^n _{i + 1})
=
 R ^{n + 1} _i
$. 
This implies $
 \dist (a, R ^n _{i - 1})
\leq
 2 \cdot \dist (a, R ^{n + 1} _i)
$. 
Since $a \in \closure{R ^\infty _i}$, 
the right-hand side tends to $0$ as $n \to \infty$, 
and hence, so does $\dist (a, R ^n _{i - 1})$. 
Thus, $a \in \closure{R ^\infty _{i - 1}}$.
Repeating the same argument for $i-1$, $i-2$, \ldots, we arrive at
$a \in \closure{R ^\infty _0} = P$. 

Similarly, if $j < k$, then $
a \in S ^\infty _j \subseteq S ^{n + 1} _j = \dom (S ^n _{j + 1}, R ^n _{j - 1})
$ for all $n \in \Nset$. 
Thus, $
 \dist (a, S ^n _{j + 1})
\leq
 \dist (a, R ^n _{j - 1}) 
\leq 
 \dist (a, R ^n _i) 
\to 
 0
$ as $n \to \infty$ because $
 a \in \closure{R ^\infty _i}
$.  So $a \in S ^\infty _{j + 1}$.  
Repeating the argument
for $j + 1$, $j + 2$, \ldots, we obtain $a \in S ^\infty _k = Q$. 

Thus we have $a \in P \cap Q$, 
contradicting the assumption that $P$ and $Q$ are disjoint. 
\end{proof}

\begin{proof}[Proof of Proposition~\ref{p:it-fix}.]
Our goal is to show that $
F ((\closure{R ^\infty _i}, S ^\infty _i) _i) = (\closure{R ^\infty _i}, S ^\infty _i) _i
$.  Since $F$ is monotone, 
$
  F ((\closure{R ^\infty _i}, S ^\infty _i) _i) 
 \geq 
  F ((R ^n _i, S ^n _i) _i)
 \geq 
  (R ^n _i, S ^n _i) _i
$ for each $n$, and hence $
F ((\closure{R ^\infty _i}, S ^\infty _i) _i) \geq (\closure{R ^\infty _i}, S ^\infty _i) _i
$.  It remains to show that $
F ((\closure{R ^\infty _i}, S ^\infty _i) _i) \leq (\closure{R ^\infty _i}, S ^\infty _i) _i
$, which means,
by the definition of $F$, that 
\begin{equation}\label{e:first}
\dom (S ^\infty _{i + 1}, \closure{R ^\infty _{i - 1}})
\supseteq
 S ^\infty _i, 
\end{equation}
 and 
\begin{equation}\label{e:second}
\dom (\closure{R ^\infty _{i - 1}}, S ^\infty _{i + 1})
\subseteq
 \closure{R ^\infty _i}. 
\end{equation}

The inclusion~\eqref{e:first} follows just by continuity of the distance function:
We have  $S ^\infty _i = \bigcap _{n \in \Nset} S ^{n + 1} _i=
\bigcap _{n \in \Nset} \dom \bigl( S ^n _{i + 1}, R ^n _{i - 1} \bigr)$.
So for $x\in S ^\infty _i$
we have $\dist (x, S ^n _{i + 1}) \leq \dist (x, R ^n _{i - 1})$
for every $n$, and 
$\dist(x,S^\infty_{i+1})=\lim _{n \to \infty} \dist (x, S ^n _{i + 1}) \leq 
\lim _{n\to\infty} \dist (x, R ^n _{i - 1})=
\dist(x, \closure{R ^\infty _{i - 1}})$. 
Hence $x\in \dom (S^\infty _{i + 1}, \closure{R ^\infty _{i - 1}})$ and \eqref{e:first} is proved.

For proving \eqref{e:second}, we need the previous lemmas.
By \eqref{e:first}, we have 
\begin{equation}\label{e:second'}
 \Rset ^d \setminus \dom (S ^\infty _{i + 1}, \closure{R ^\infty _{i - 1}})
\subseteq
 \Rset ^d \setminus S ^\infty _i
\subseteq
 R ^\infty _i, 
\end{equation} 
where the latter inclusion is because 
$R ^n _i \cup S ^n _i = \Rset ^d$ 
for every $n$ 
(this was part of the definition of $\mathcal L$). 
We obtain \eqref{e:second} by 
taking the closure of \eqref{e:second'}, 
using Lemma~\ref{lemma: bisector is the boundary of the dominance region} 
for the left-hand side; 
for applying this lemma, 
we need $
\closure{R ^\infty _{i - 1}} \cap S ^\infty _{i + 1} = \emptyset
$, which holds by Lemma~\ref{eq12}. 
\end{proof}

If the initial element $(R ^0 _i, S ^0 _i) _i$ in Proposition~\ref{p:it-fix}
is less than or equal to all $k$-gradations 
(with respect to the ordering $\mathord\leq$), 
then so is $(R ^n _i, S ^n _i) _i$ for all $n$ (inductively 
by the monotonicity of $F$), 
and therefore, the resulting $
(\closure{R ^\infty _i}, S ^\infty _i) _i
$ is the \emph{least} $k$-gradation. 
This is the case when, for example, 
$(R ^0 _i, S ^0 _i) _i$ is  the least element $(P, \Rset ^d) _i$ of $\mathcal L$.  

The trisector in Figure~\ref{figure: l1 non-unique trisector}
corresponds to the least $3$-gradation, but this $3$-gradation
is not obtained by iteration from the least element of $\mathcal L$.
This witnesses that Proposition~\ref{p:it-fix} may indeed fail
for norms that are not strictly convex.

\paragraph{Computational issues}

Proposition~\ref{p:it-fix} gives 
a method to draw a $k$-sector in Euclidean spaces: 
By applying $F$ iteratively, 
we get an ascending chain $
(R ^0 _i, S ^0 _i) _i \leq (R ^1 _i, S ^1 _i) _i \leq \cdots
$ whose supremum $(R ^\infty _i, S ^\infty _i) _i$ gives 
a $k$-gradation $(\closure{R ^\infty _i}, S ^\infty _i) _i$. 
If we stop the iteration 
after sufficiently many steps, 
we obtain an approximation of $(\closure{R ^\infty _i}, S ^\infty _i) _i$.

However, implementing the algorithm is not entirely trivial, 
because even if the sites are simple, 
applying $F$ repeatedly 
gives rise to regions that are hard to describe. 
For example, 
consider the case where $P$ and $Q$ are points in the plane, 
and we begin with $(R ^0 _i, S ^0 _i) _i = (P, \Rset ^2) _i$. 
Then $\bd R ^1 _{k - 1}$ is the line bisecting $P$ and $Q$, 
and $\bd R ^2 _{k - 2}$ is the parabola bisecting $P$ and this line. 
The next iteration yields the curve $\bd R ^3 _{k - 3}$ (or $\bd R ^3 _{k - 1}$) 
which bisects between a parabola and a point. 

Thus, 
unlike typical basic operations allowed 
in computational geometry, 
taking the bisector gives rise to increasingly complicated curves. 
If we have an analytic description of the boundary curves of 
the regions $R ^n _i$ and $S ^n _i$, 
each of the curves defining $R ^{n + 1} _i$ and $S ^{n + 1} _i$ is described by
a system of differential equations associated with the bisecting condition.
But solving such equations exactly in each iterative step
is computationally expensive.  
Therefore, we need to find a practical method for 
approximating the bisectors (assuming that we only compute 
the regions in a bounded area). 

One method is to approximate each region by 
a polygonal region. 
We start with some polygonal approximations
$\tilde P$, $\tilde Q$ of $P$, $Q$, 
and let $(\tilde R ^0 _i, \tilde S ^0 _i) _i := (\tilde P, \Rset ^d) _i$. 
Then for each $n$, 
we compute an approximation $(\tilde R ^{n + 1} _i, \tilde S ^{n + 1} _i) _i$ 
to $F ((\tilde R ^n _i, \tilde S ^n _i) _i)$, 
where the bisector of two polygonal regions, 
which is a piecewise quadratic curve, 
is approximated  by a suitable polygonal region. 
To ensure that $(\tilde R ^n _i, \tilde S ^n _i) _i$ converges to an underestimate
(with respect to the ordering $\leq$)
of the least $k$-gradation $(\closure{R ^\infty _i}, S ^\infty _i) _i$, 
we should have $
(\tilde R ^n _i, \tilde S ^n _i) _i \leq (\tilde R ^{n + 1} _i, \tilde S ^{n + 1} _i) _i \leq F ((\tilde R ^n _i, \tilde S ^n _i) _i)
$. 
This can be achieved by computing 
an inner approximation of $R _i ^{n + 1}$ and 
an outer approximation of $S _i ^{n + 1}$. 

Another method is to consider the problem in the pixel geometry,
where each of the approximate regions $\tilde R ^n _i$, $\tilde S ^n _i$ 
is a set of pixels. 
In computing $(\tilde R ^{n + 1} _i, \tilde S ^{n + 1} _i) _i$, 
we again make sure that $
(\tilde R ^n _i, \tilde S ^n _i) _i \leq (\tilde R ^{n + 1} _i, \tilde S ^{n + 1} _i) _i \leq F ((\tilde R ^n _i, \tilde S ^n _i) _i)
$.  Then $(\tilde R ^n _i, \tilde S ^n _i) _i$ stabilizes eventually, 
providing a lower estimate of the least $k$-gradation.
The analysis of time complexity (as a function of precision) 
of these methods is left as a future research problem. 

\paragraph{Uniqueness}

The curves in Figure~\ref{figure: k-sector example} were drawn using
the pixel geometry model explained above. 
As we mentioned there, they are guaranteed to
lie on $P$'s side of any true $k$-sector curves. 
By exchanging $P$ and $Q$, 
we obtain also an approximate $k$-sector that lies
on $Q$'s side of any true $k$-sector. 
We tried computing these lower and upper estimates for 
several different $P$, $Q$ and $k$ in Euclidean plane, 
but we did not find them differ by a significant amount. 
Because of this, we suspect that the $k$-sector is always unique:

\begin{conjecture}
The $k$-sector of any two disjoint nonempty closed sets in Euclidean space 
is unique. 
\end{conjecture}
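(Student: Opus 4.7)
The plan is to prove uniqueness by first reducing to uniqueness of $k$-gradations via Proposition~\ref{p:k-sector is a boundary}, and then attempting to extend the Kawamura et al.\ uniqueness proof for zone diagrams from $k = 3$ to arbitrary $k$ under a strictly convex (in particular Euclidean) norm.

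First, I would observe that any $k$-sector $(C _i) _i$ canonically determines a $k$-gradation via $R _i := \dom (C _{i - 1}, C _{i + 1})$ and $S _i := \dom (C _{i + 1}, C _{i - 1})$ (this is what the second half of the proof of Proposition~\ref{p:k-sector is a boundary} constructs), with $C _i = R _i \cap S _i$.  Hence uniqueness of $k$-sectors would follow from showing that the least $k$-gradation $(\closure{R ^\infty _i}, S ^\infty _i) _i$ built in Proposition~\ref{p:it-fix} coincides with the greatest $k$-gradation, which one would obtain dually by iterating $F$ downward from the top element $(\Rset ^d, Q) _i$ of $\mathcal L$ and then taking interiors rather than closures.

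Next, I would try to imitate the Kawamura et al.\ strategy. For $k = 3$ a $3$-gradation is essentially determined by a single zone-diagram pair $(A, B)$, and under a strictly convex norm any gap between two candidate zone diagrams is forced to be empty using uniqueness of closest points. The natural analogue here is: take two $k$-gradations $(R _i, S _i) _i \leq (R' _i, S' _i) _i$, define gaps $G _i := R' _i \setminus R _i$ and $H _i := S _i \setminus S' _i$, and combine the fixed-point equations with Lemma~\ref{lemma: bisector is the boundary of the dominance region} to argue that any point in such a gap is equidistant from a pair of sets in a way that confines it to a lower-dimensional bisector, contradicting that the gaps are open. The strict convexity assumption is essential, since Figure~\ref{figure: l1 non-unique trisector} already rules out uniqueness without it.

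The hard part, and the reason this remains only a conjecture, will be handling the coupling among the equations at different values of $i$. When $k = 3$, the recursion collapses to the single equation $A = \dom (P, \dom (Q, A))$ and is amenable to a direct contraction analysis; but for general $k$, a gap at index $i$ in the $R$-chain can in principle be supported by matching gaps at indices $i \pm 1$, so a purely local bisector-geometry argument does not close. One would need a global telescoping estimate, presumably controlling the Hausdorff distance between two candidate $C _i$'s by a quantitative modulus of strict convexity propagated across all $k - 1$ indices simultaneously, and ideally installed by induction on $k$ with the zone-diagram result as the base case. Producing such a quantitative contraction is precisely the step that the present paper flags as open, and the numerical evidence from the pixel-geometry upper and lower approximations discussed in Section~\ref{section: constructive} is presently the main piece of support for believing the estimate should exist at all.
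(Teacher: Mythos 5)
This statement is the paper's concluding \emph{Conjecture}; the paper offers no proof of it, and explicitly presents it as open, supported only by numerical evidence from the pixel-geometry approximations. Your proposal, to its credit, does not actually claim to prove it either: you reduce the problem (correctly, in the useful direction --- uniqueness of $k$-gradations would imply uniqueness of $k$-sectors, since by Proposition~\ref{p:k-sector is a boundary} every $k$-sector arises as $(R_i\cap S_i)_i$ for some gradation), you identify the natural strategy of comparing the least gradation with the greatest one (obtainable by the $P\leftrightarrow Q$ symmetry), and you then correctly name the obstruction: for $k=3$ the problem collapses to a single fixed-point equation $A=\dom(P,\dom(Q,A))$ amenable to the Kawamura et al.\ analysis, whereas for general $k$ a gap at index $i$ can be sustained by gaps at neighbouring indices, so no local argument closes. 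That diagnosis matches the state of affairs in the paper.

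The genuine gap is simply that no argument is supplied for the one step that matters: the ``global telescoping estimate'' or ``quantitative contraction propagated across all $k-1$ indices'' is named but not constructed, and naming it is not evidence that it exists. Two smaller cautions on the parts you do sketch. First, be careful with the claim that a gap point would be ``confined to a lower-dimensional bisector, contradicting that the gaps are open'': the gaps $G_i=R_i'\setminus R_i$ need not be open (the $R_i$ in the least gradation are closed, but $R_i'$ need not have any interior structure relative to $R_i$), so the contradiction as stated does not come for free. Second, the dual ``greatest gradation via downward iteration and interiors'' is not established anywhere in the paper; Proposition~\ref{p:it-fix} only treats the upward iteration with closures, and you would either need to prove the dual statement or invoke the $P\leftrightarrow Q$ swap explicitly. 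In short: your proposal is an accurate survey of why the conjecture is plausible and why it is hard, but it is not a proof, and it should not be presented as one.
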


\subsection*{Acknowledgements}

We gratefully acknowledge valuable discussions with many friends 
including Tetsuo Asano and G\"unter Rote;
indeed, we owe Tetsuo for precious information of his 
recent work on convex distance cases.
We also thank Yu~Mura\-matsu for his programming work in drawing figures.
D.\,R. would like to express his thanks to 
Simeon Reich for his helpful discussion.
Finally, we remark that the warm comments from the 
audience of the preliminary announcement at EuroCG~2009 
encouraged us to work further on the subject. 

A.\,K. is supported by
the Nakajima Foundation and 
the Natural Sciences and Engineering Research Council of Canada. 
The part of this research by T.\,T.
was partially supported by the 
JSPS Grant-in-Aid for Scientific Research (B) 18300001.

\end{document}